\renewcommand{\baselinestretch}{1.5}
\newcommand{\bol}[1]{\mbox{\boldmath$#1$}}
\newcommand{\bSigma}{\bol{\Sigma}}
\newcommand{\bm}{\bol{\mu}}
\newcommand{\tbm}{\tilde{\bol{\mu}}}
\newcommand{\bx}{\mathbf{X}}
\newcommand{\by}{\mathbf{Y}}
\newcommand{\bL}{\mathbf{L}}
\newcommand{\bz}{\mathbf{z}}
\newcommand{\bB}{\mathbf{B}}
\newcommand{\bw}{\mathbf{w}}
\newcommand{\bi}{\mathbf{1}}
\newcommand{\bzero}{\mathbf{0}}
\newcommand{\bI}{\mathbf{I}}
\newcommand{\Var}{\mbox{Var}}
\newcommand{\brx}{\breve{\bx}}
\newcommand{\brm}{\breve{\bol{\mu}}}
\newcommand{\tbF}{\tilde{\mathbf{\Phi}}}
\newcommand{\bF}{\mathbf{\Phi}}
\newcommand{\bn}{\boldsymbol{\nu}}
\newcommand{\eps}{\pmb{\varepsilon}}
\newcommand{\bry}{\breve{\by}}
\newcommand{\tbSigma}{\tilde{\bSigma}}
\newcommand{\bO}{\mathbf{O}}
\newcommand{\brn}{\breve{\bn}}
\newcommand{\tbn}{\tilde{\bn}}
\newcommand{\tbB}{\tilde{\bB}}
\newcommand{\tbb}{\tilde{\mathbf{b}}}
\newtheorem{theorem}{Theorem}
\newtheorem{corollary}{Corollary}
\newfont{\tabfont}{cmr7 at 7pt}
\begin{document}

\begin{center}
\vspace*{2cm} \noindent {\bf \large On the Exact Solution of the Multi-Period Portfolio Choice Problem for an Exponential Utility under Return Predictability}\\
\vspace{1cm} \noindent {\sc  Taras Bodnar$^{a}$, Nestor Parolya$^{a}$ and Wolfgang Schmid$^{a,*}$
\footnote{$^*$ Corresponding author. E-mail address: schmid@euv-frankfurt-o.de} }\\

\vspace{1cm} {\it \footnotesize  $^a$
Department of Statistics, European University Viadrina, PO Box 1786, 15207 Frankfurt (Oder), Germany} \\
\end{center}

\begin{abstract}
In this paper we derive the exact solution of the multi-period portfolio choice problem for an exponential utility function under return predictability. It is assumed that the asset returns depend on predictable variables and that the joint random process of the asset returns and the predictable variables follow a vector autoregressive process. We prove that the optimal portfolio weights depend on the covariance matrices of the next two periods and the conditional mean vector of the next period. The case without predictable variables and the case of independent asset returns are partial cases of our solution. Furthermore, we provide an empirical study where the cumulative empirical distribution function of the investor's wealth is calculated using the exact solution. It is compared with the investment strategy obtained under the additional assumption that the asset returns are independently distributed.
\end{abstract}

\vspace{0.7cm}

\noindent {\it Keywords}: multi-period asset allocation, expected utility optimization, exponential utility function, return predictability.
\newpage

\section{Introduction}
Investment analysis and portfolio choice theory are very important and challenging topics in finance and economics nowadays. Since Harry Markowitz (1952) presented his mean-variance paradigm  portfolio theory has become a fundamental tool for understanding the interactions of systematic risk and reward.

It is well known that the mean-variance optimization problem of Markowitz (1952) is equivalent to the expected exponential utility optimization under the normality assumption (see Merton (1969)). Unfortunately, his approach only gives an answer to the one-period portfolio selection problem in discrete time but it says nothing about the multi-period (long-run) setting. Therefore, it is of importance to investigate the multi-period portfolio optimization problem which is of great relevance for an investor as well.  The multi-period portfolio selection problem has been analyzed for different types of utility functions by, e.g., Mossin (1968), Merton (1969, 1972), Samuelson (1969), Elton and Gruber (1974), Brandt and Santa-Clara (2006), Basak and Chabakauri (2010).

The continuous case has already been solved for many types of utility functions in the one- and multi-period case by Merton (1969). A number of generalizations under weaker assumptions are given, among others, by A\"{i}t-Sahalia et al. (2009) and Skaf and Boyd (2009). Exact solutions in discrete time are even more difficult to obtain in the multi-period case. Mossin (1968) considers the case of one risk and one riskless asset. He derives conditions when the multi-period strategy becomes myopic or partial myopic. Frequently this can be achieved by demanding independent asset returns. However, the assumption of independence is unfortunately not fulfilled in many applications.

For an investor it would be very useful to have a closed-form solution of the discrete multi-period portfolio problem. Moreover, it is desirable that the optimal portfolio weights can easily calculated at each period. Analytical solutions of the multi-period optimal portfolio choice problems are hard to obtain and they are only available for some partial cases which are often derived under very restrictive assumptions on the distribution of the asset returns. For instance, a closed-form solution for the multi-period portfolio choice problem exists for the quadratic utility function under the assumption that the asset returns are independently distributed (see Li and Ng (2000), Leippold et al. (2004)).

In the present paper we consider an investor who invests into $k$ risky assets and one riskless asset with an investment strategy based on the exponential utility function
\begin{equation}\label{exp_utility}
U(W_t)=-e^{-\alpha W_t}\,.
\end{equation}
Here $W_t$ denotes the investor's wealth at period $t$ and $\alpha>0$ stands for the coefficient of absolute risk aversion (ARA), which is a constant over time for the exponential utility (CARA utility). The application of the exponential utility function is more plausible than the use of the quadratic utility since the first one is monotonically decreasing. That is why the exponential utility function is commonly used in portfolio selection theory. Moreover, the optimization of the expected exponential utility function leads to the well known mean-variance utility maximization problem and consequently its solution lays on the mean-variance efficient frontier.

We derive a closed-form solution of the multi-period portfolio choice problem with the exponential utility function (\ref{exp_utility}) under the assumption that the asset returns depend on certain predictable variables. The joint process consists of the asset returns and the predictable variables and it is assumed to follow a vector autoregressive (VAR) process. This approach is very popular in finance and it is often used for modeling the asset returns (see, e.g., Campbell (1991, 1996), Barberis (2000), Brandt (2010)).

The rest of the paper is organized as follows. In Section 2, the main result of the paper is given. In Theorem 1 an analytical expression of the portfolio weights is provided for each period. In Corollary 1, the case without a predictable vector is considered while independent asset returns are treated in Corollary 2. In Section 3 a short empirical study is presented. The performance of the derived strategy is compared with the one for independent asset returns. The comparison is performed in terms of the cumulative empirical distribution function of the investor's wealth at the end of the investment period. We find significant improvements if the dependence structure is taken into account. Section 4 contains a short summary.

\section{Multi-Period Portfolio Problem for an Exponential Utility}
In this section we derive the analytical solution of the multi-period portfolio choice problem for an exponential utility function assuming that the asset returns and the predictable variables follow a VAR(1) process.

There are only a few papers in literature where the exponential utility function is considered in the multi-period discrete time setting. For instance, \c{C}anako$\breve{\text{g}}$lu and \"{O}zekici (2009) solved the portfolio choice problem assuming that the stochastic market follows a discrete time Markov chain and all parameters of the asset returns, i.e., mean vector and covariance matrix, depend only on the current state of the stochastic market and not on the previous states which is equivalent to the assumption of independence in our settings. In the paper of Soyer and Tanyeri (2006) a Bayesian computational approach with the exponential utility was presented. The authors write that the solution of the multi-period portfolio choice problem with the exponential utility under the assumption of normality "\textit{...cannot be evaluated in closed form and the optimal portfolio cannot be obtained analytically}". In this paper, however, we present an exact solution assuming that the asset returns follow a vector autoregressive process with predictable variables.

 Let $\bx_t=\left(X_{t,1},X_{t,2},\ldots,X_{t,k}\right)^{\prime}$ denote the vector of the returns of $k$ risky assets and let $r_{f,t}$ be the return of the riskless asset at time $t$. Let $\bz_t$ be a $p$-dimensional vector of predictable variables. We assume that $\by_t=(\bx_t^\prime, \bz_t^\prime)^\prime$ follows a VAR(1) process given by
\begin{equation}\label{VAR}
\by_t=\tilde{\bn}+\tbF\by_{t-1}+\tilde{\eps}_t\,
\end{equation}
with $\tilde{\eps}_t\sim \mathcal{N}(\bzero, \tilde{\bSigma}(t))$, where $\tilde{\bSigma}(t)$ is a positive definite deterministic matrix function. Let $\mathcal{F}_t$ denote the information set available at time $t$. Then $\by_t|\mathcal{F}_{t-1}\sim \mathcal{N}_{k+p}(\tbm_t, \tbSigma(t))$, i.e., the conditional distribution of $\by_t$ given $\mathcal{F}_{t-1}$ is a $k+p$ dimensional normal distribution with mean vector $\tbm_t=E(\by_t|\mathcal{F}_{t-1})=E_{t-1}(\by_t)$ and covariance matrix $\Var(\by_t|\mathcal{F}_{t-1})=\tbSigma(t)$.

The stochastic model (\ref{VAR}) is described in detail by Campbell et al. (2003) who argued that the application of VAR(1) is not a restrictive assumption because every vector autoregression can be presented as a VAR(1) process through an expansion of the vector of state (predictable) variables. The idea behind this approach is to find a vector of predictable variables $\bz_t$ which is mostly correlated with the asset returns and to build a VAR(1) process with respect to the asset returns $\bx_t$ and the vector of predictable variables $\bz_t$. The choice of $\bz_t$ depends rather on the data and not on the utility function. Possible predictable variables are, e.g., the dividend yield (cf. Campbell at al. (2003)), the term spread (see, e.g, Brandt et al. (2006)) or another asset return.

From (\ref{VAR}) we obtain the following model for $\bx_t$ expressed as
\begin{equation}\label{VAR11}
\bx_t=\bL\tilde{\bn}+\bL\tbF\by_{t-1}+\bL\tilde{\eps}_t=\bn+\bF\by_{t-1}+\eps_{t}~~~\text{with}~~\bL=[\bI_k~ \bO_{k,p}]\,,
\end{equation}
where $\bI_k$ is a $k\times k$ identity matrix and $\bO_{k,p}$ is a $k\times p$ matrix of zeros.
Consequently, $\bx_t|\mathcal{F}_{t-1}\sim \mathcal{N}_k(\bm_t, \bSigma(t))$, where $\bm_t=E(\bx_t|\mathcal{F}_{t-1})=\bn+\bF\by_{t-1}$ and $\bSigma(t)=\Var(\bx_t|\mathcal{F}_{t-1})=\bL\tbSigma(t)\bL^\prime$.

 Let $\bw_t=\left(w_{t,1},w_{t,2},\ldots,w_{t,k}\right)^{\prime}$ denote the vector of the portfolio weights of the $k$ risky assets at period $t$. Then the evolution of the investor's wealth is expressed as
\begin{equation}\label{wealth_Sec3}
W_t=W_{t-1}\left(1+r_{f,t}+\bw_{t-1}^{\prime}(\bx_t-r_{f,t}\bi)\right)=W_{t-1}\left(R_{f,t}+\bw^{\prime}_{t-1}\brx_t\right) \,,
\end{equation}
where $R_{f,t}=1+r_{f,t}$ and $\brx_{t}=\bx_{t}-r_{f,t}\bi$ with $\brm_{t}=E_{t-1}(\brx_{t})=\bn+\bF\by_{t-1}-r_{f,t}\bi$.
The aim of the investor is to maximize the expected utility of the final wealth.

The optimization problem is given by
\begin{equation}\label{OP_Sec3}
V(0,W_0,\mathcal{F}_{0})=\max\limits_{\{\bw_s\}_{s=0}^{T-1}}E_t[U(W_T)]\,
\end{equation}
with the terminal condition
\begin{equation}\label{exponent}
U(W_T)=-\exp(-\alpha W_T)~~~\text{for}~~\alpha>0.
\end{equation}
Following Pennacchi (2008) the optimization problem (\ref{OP_Sec3}) can be solved by applying the following Bellman equation at time point $T-t$
\begin{eqnarray}\label{BE_Sec3}
V(T-t,W_{T-t},\mathcal{F}_{T-t})
&=&\max\limits_{\bw_{T-t}}E_{T-t}
\Big{[}\max\limits_{\{\bw_s\}_{s=T-t+1}^{T-1}}E_{T-t+1}[U(W_T)]\Big{]} \\
&=&\max\limits_{\bw_{T-t}}E_{T-t}\Big{[}V(T-t+1,W_{T-t}\left(r_{f,T-t}+\bw^{*\;\prime}_{T-t+1}\brx_{T-t+1}\right),\mathcal{F}_{T-t+1})\Big{]}\nonumber
\end{eqnarray}
subject to (\ref{exponent}), where $\bw^{*}_{T-t+1}$ are the optimal portfolio weights at period $T-t+1$. Note that in contrast to the static case now the vector of optimal portfolio weights $\bw_{T-t+1}$ is a function of the weights of the next periods, i.e., of $\bw_{T-t+1}, \bw_{T-t+2},\ldots,\bw_{T-1}$, what is the consequence of the backward recursion method (see, e.g. Pennacchi (2008)).

For the period $T-1$ we get
{\small
\begin{eqnarray}\label{VT-1}
&&V(T-1,W_{T-1},\mathcal{F}_{T-1})\nonumber\\
&&=E_{T-1}\left[-\exp(-\alpha W_{T-1}(R_{f,T}+\bw^{\prime}_{T-1}\brx_{T}))\right]\nonumber\\
&&=-\exp(-\alpha W_{T-1}R_{f,T})E_{T-1}[\exp(-\alpha W_{T-1}\bw^\prime_{T-1}\brx_T)]\nonumber\\
&&=\exp(-\alpha W_{T-1}R_{f,T})\left(-\exp\left[-\alpha(W_{T-1}\bw^\prime_{T-1}\brm_T-\frac{\alpha}{2}\bw_{T-1}^\prime\bSigma(T)\bw_{T-1}W_{T-1}^2)\right]\right)\rightarrow\text{max}\,.
\end{eqnarray}
}
The last optimization problem is equivalent to
\begin{equation}\label{meanv}
W_{T-1}\bw^\prime_{T-1}\brm_T-\frac{\alpha}{2}\bw_{T-1}^\prime\bSigma(T)\bw_{T-1}W_{T-1}^2\rightarrow\text{max}~~~\text{over}~~\bw_{T-1}\,.
\end{equation}
Taking the derivative and solving (\ref{meanv}) with respect to $\bw_{T-1}$ we get the classical solution for the period $T-1$
\begin{equation}\label{w_t-1}
\bw^*_{T-1}=\dfrac{1}{\alpha W_{T-1}}\bSigma^{-1}(T)\brm_T=\dfrac{1}{\alpha W_{T-1}}(\bL\tbSigma(T)\bL^\prime)^{-1}(\brn_T+\bF\by_{T-1})~~\text{with}~~\brn_T=\bn-r_{f,T}\bi\,.
\end{equation}
In Theorem 1 the multi-period portfolio weights for all periods from $0$ to $T-1$ are given.

\begin{theorem}
Let $\bx_{\tau}=\left(X_{\tau,1},X_{\tau,2},\ldots,X_{\tau,k}\right)^{\prime}$ be a random return vector of $k$ risky assets. Suppose that $\bx_{\tau}$ and the vector of $p$ predictable variables $\bz_{\tau}$ jointly follow a VAR(1) process as defined in (\ref{VAR}). Let $r_{f,\tau}$ be the return of the riskless asset. Then the optimal multi-period portfolio weights  are given by (\ref{w_t-1}) for period $T-1$,
\begin{equation}\label{weigthsT-2}
\bw_{T-2}=\dfrac{1}{\alpha W_{T-2}R_{f,T}}\left(\bL\tbSigma^{-1}({T-1})\tbm^*_{T-1}-\bL\bF^\prime\bSigma^{-1}(T)(\brn_T+r_{f,T}\bF\bL^\prime\bi)\right)\,,
\end{equation}
and
{\footnotesize
\begin{align}\label{weightsT-t}
&\bw^*_{T-t}=\dfrac{1}{\alpha W_{T-t}\prod\limits_{i=T-t+2}^{T}R_{f,i}}\left(\bL\tbSigma^{-1}(T-t+1)\tbm^*_{T-t+1}-\bL\tbF^\prime\tbSigma^{-1}(T-t+2)(\brn_{T-t+3}^*+r_{f,T-t+2}\tbF\bL^\prime\bi)\right)\\
&~~\text{with}~~\tbm^*_{T-t+1}=\tbm_{T-t+1}-r_{f,T-t+2}\bL^\prime\bi ~~\text{and}~~\brn_{T-t+3}^*=\tbn-r_{f,T-t+3}\bL^\prime\bi\nonumber\,,
\end{align}
for $t=3,\ldots,T$.
}
\end{theorem}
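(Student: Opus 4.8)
The plan is to argue by backward induction on the Bellman recursion (\ref{BE_Sec3}), maintaining the hypothesis that at every horizon the value function keeps the exponential--quadratic shape
\[
V(T-t,W_{T-t},\mathcal{F}_{T-t})=-\exp\!\Big(-\alpha W_{T-t}\prod_{i=T-t+1}^{T}R_{f,i}+q_{T-t}(\by_{T-t})\Big),
\]
where $q_{T-t}$ is a quadratic form in the state vector $\by_{T-t}$. The base case $t=1$ is read off from (\ref{VT-1}): substituting the optimiser (\ref{w_t-1}) leaves $q_{T-1}(\by_{T-1})=-\tfrac12(\brn_T+\bF\by_{T-1})'\bSigma^{-1}(T)(\brn_T+\bF\by_{T-1})$, which is of the required type with quadratic coefficient $\bF'\bSigma^{-1}(T)\bF$.

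For the inductive step I would insert $W_{T-t+1}=W_{T-t}(R_{f,T-t+1}+\bw_{T-t}'\brx_{T-t+1})$, with $\brx_{T-t+1}=\bL\by_{T-t+1}-r_{f,T-t+1}\bi$, into $V(T-t+1)$. Since the wealth contributes a term that is linear in $\by_{T-t+1}$ while $q_{T-t+1}$ is quadratic, the exponent is affine in $\bw_{T-t}$ and quadratic in $\by_{T-t+1}$. I would then evaluate $E_{T-t}[\cdot]$ using $\by_{T-t+1}\mid\mathcal{F}_{T-t}\sim\mathcal{N}(\tbm_{T-t+1},\tbSigma(T-t+1))$ and the Gaussian identity
\[
E\big[\exp(\bt'\by-\tfrac12\by'\bC\by)\big]=|\bI+\bV\bC|^{-1/2}\exp\!\Big(\tfrac12(\bt+\bV^{-1}\bm)'(\bV^{-1}+\bC)^{-1}(\bt+\bV^{-1}\bm)-\tfrac12\bm'\bV^{-1}\bm\Big),
\]
valid for $\by\sim\mathcal{N}(\bm,\bV)$. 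The determinant prefactor is deterministic and hence plays no role in the optimisation.

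I would next optimise over $\bw_{T-t}$: as the prefactor is a positive constant, maximising $V$ is equivalent to minimising a convex quadratic in $\bw_{T-t}$, and the first-order condition produces a linear optimiser of the schematic form $\bw_{T-t}^{*}=\tfrac{1}{A}(\bL\bP^{-1}\bL')^{-1}(\bL\bP^{-1}\bb-r_{f,T-t+1}\bi)$, with $A=\alpha W_{T-t}\prod_{i=T-t+2}^{T}R_{f,i}$, $\bP=\tbSigma^{-1}(T-t+1)+\bM_{T-t+1}$ (here $\bM_{T-t+1}$ is the quadratic coefficient inherited from $q_{T-t+1}$), and $\bb$ collecting the linear data $\bt+\bV^{-1}\bm$ at $\bw_{T-t}=\bzero$. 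Exploiting the block form $\bL=[\bI_k\ \bO_{k,p}]$, I would reduce $(\bL\bP^{-1}\bL')^{-1}\bL\bP^{-1}$ to $[\bI_k\ -\bP_{12}\bP_{22}^{-1}]$ through the Schur complement, reinsert $\bw_{T-t}^{*}$, and verify that $V(T-t)$ is again exponential--quadratic, thereby closing the induction and recovering (\ref{weigthsT-2}) at $t=2$ and (\ref{weightsT-t}) for $t\ge3$.

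The main obstacle will be the propagation of the quadratic coefficient $\bM_{T-t}$, not the Gaussian integration. After reinserting the optimiser, the feedback term $\tfrac12\bv^{*\prime}\bP^{-1}\bv^{*}$, with $\bv^{*}=\bb-A\bL'\bw_{T-t}^{*}$, must collapse so that the state-dependent quadratic part of $q_{T-t}$ — and therefore the weight one period earlier — carries \emph{only} the two adjacent covariance matrices $\tbSigma(T-t+1)$ and $\tbSigma(T-t+2)$ demanded by the theorem, rather than the whole tail of future covariances. I expect this to rest on a Woodbury/Schur cancellation driven by the projection $\bL$ onto the risky block, whereby at the optimum the risky-block information is exactly absorbed by $\bw_{T-t}^{*}$ and the deeper covariance dependence is annihilated. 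Establishing that this cancellation is exact, while simultaneously tracking the linear and constant pieces so that $\tbm^{*}_{T-t+1}$ and $\brn^{*}_{T-t+3}$ emerge with the correct riskless-rate shifts, is where the real work of the proof will be concentrated.
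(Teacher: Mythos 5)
Your overall architecture --- backward induction on an exponential--quadratic value function, evaluation of the conditional expectation via the moment generating function of a Gaussian quadratic form, first-order condition in the weights --- is the same as the paper's. But the step you yourself single out as ``the main obstacle'', namely that after reinserting the optimiser the quadratic feedback term collapses so that only the two adjacent covariance matrices survive, is exactly the step your proposal does not carry out, and the mechanism you conjecture for it (a Woodbury/Schur cancellation hidden in the projection $(\bL\bP^{-1}\bL^\prime)^{-1}\bL\bP^{-1}$) is not the one the paper uses. In the paper the reduced objective (\ref{max}) is written as $-\tfrac12\,\bv^\prime\left(\tbSigma(T-1)+\tbSigma(T-1)\bB(T)\tbSigma(T-1)\right)^{-1}\bv$ with residual $\bv=\tbm_{T-1}-r_{f,T}\bL^\prime\bi-\tbSigma(T-1)\mathbf{b}(\bw_{T-2})$; since the weighting matrix is positive definite, the maximal value is $0$ and is attained precisely when $\bv=\bzero$, i.e.\ at the solution of (\ref{eq}). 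The weight formula (\ref{weigths}) then follows simply by left-multiplying (\ref{eq}) by $\bL\tbSigma^{-1}(T-1)$ and using $\bL\bL^\prime=\bI_k$. Because the residual vanishes identically at the optimum, the factor $(\bI+\bB(T)\tbSigma(T-1))^{-1}$ --- and with it the entire tail of future covariances --- drops out of the value function, which reduces to (\ref{Vt2}) with the new quadratic coefficient $\tbF^\prime\tbSigma^{-1}(T-1)\tbF$ built from one-step-ahead quantities only; that is what closes the induction. No Woodbury identity is involved.

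The gap is therefore concrete: your schematic first-order-condition solution coincides with the paper's answer only if the $(k+p)$-dimensional equation (\ref{eq}) is exactly solvable using the $k$ free components of $\bw_{T-2}$, equivalently only if $\tbSigma^{-1}(T-1)(\tbm_{T-1}-r_{f,T}\bL^\prime\bi)-\bF^\prime\bSigma^{-1}(T)(\brn_T+r_{f,T}\bF\bL^\prime\bi)$ lies in the range of $\bL^\prime$ (its last $p$ components vanish). If it does not, the maximiser is the metric projection onto an affine subspace that you describe, the residual does not vanish, the value function retains the factor $(\bI+\bB(T)\tbSigma(T-1))^{-1}$, and the claimed two-period structure of the weights fails. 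Your proposal names this difficulty but neither proves the solvability nor supplies the alternative argument; until that is done the induction does not close, so the proof is incomplete at its decisive point.
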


\begin{proof}
The value function at time point $T-2$ is obtained by substituting (\ref{w_t-1}) into (\ref{VT-1})
\begin{eqnarray}\label{VT-2}
&&V(T-2,W_{T-2},\mathcal{F}_{T-2})=-E_{T-2}\left[\exp{\left(-\alpha W_{T-1}R_{f,T}-\frac{1}{2}\breve{s}_T\right)}\right]\nonumber\\
&&=-\exp\left(-\alpha W_{T-2}R_{f,T-1}R_{f,T}\right)E_{T-2}\left[\exp{\left(-\alpha W_{T-2}R_{f,T}\bw^\prime_{T-2}\brx_{T-1}-\frac{1}{2}\breve{s}_T\right)}\right]\,,
\end{eqnarray}
where $\breve{s}_T=\brm_T^\prime\bSigma(T)^{-1}\brm_T$. Second, according to the properties of VAR(1) processes we get that
\begin{equation}\label{sT}
\breve{s}_T=\bry_{T-1}^\prime\bF^\prime\bSigma(T)^{-1}\bF\bry_{T-1}+2\bry_{T-1}^\prime\bF^\prime\bSigma(T)^{-1}\brn_T+(\brn_T+r_{f,T}\bF\bL^\prime\bi)^\prime\bSigma(T)^{-1}(\brn_T+r_{f,T}\bF\bL^\prime\bi)\,,
\end{equation}
with $\bry_t=\by_t-r_{f,t}\bL^\prime\bi$. This is a quadratic form with respect to the conditional normally distributed vector $\by_{T-1}$. Moreover, using $\bL\bL^\prime=\bI_k$,
\begin{equation}\label{bwy}
\bw^\prime_{T-2}\brx_{T-1}=\bw^\prime_{T-2}(\bL\by_{T-1}-r_{f,T}\bi)=\bw^\prime_{T-2}\bL\bry_{T-1}\,
\end{equation}
and (\ref{VT-2}), we get
\begin{align}\label{VT2}
&V(T-2,W_{T-2},\mathcal{F}_{T-2})=-\exp\left(-\alpha W_{T-2}R_{f,T}R_{f,T-1}\right)\nonumber\\
&\times E_{T-2}\left[\exp{\left(-\frac{1}{2}\bry^\prime_{T-1}\bB(T)\bry_{T-1}-\mathbf{b}(\bw_{T-2})^\prime\bry_{T-1}-c\right)}\right]\,,
\end{align}
where $\bB_T=\bF^\prime\bSigma(T)^{-1}\bF$, $\mathbf{b}(\bw_{T-2})=\bF^\prime\bSigma(T)^{-1}\brn_T+\alpha W_{T-2}R_{f,T}\bL^\prime\bw_{T-2}+r_{f,T}\bB(T)\bL^\prime\bi$ and $c=\frac{1}{2}(\brn_T+r_{f,T}\bF\bL^\prime\bi)^\prime\bSigma(T)^{-1}(\brn_T+r_{f,T}\bF\bL^\prime\bi)$.

Following Mathai and Provost (1992, Theorem 3.2a.1) the expectation given in (\ref{VT2}) is the moment generating function of the quadratic form in normal variables at point $-1$. Hence, it holds that
{\begin{align}\label{mgfq}
&V(T-2,W_{T-2},\mathcal{F}_{T-2})=-\exp\left(-\alpha W_{T-2}R_{f,T-1}R_{f,T}\right)|\bI+\bB(T)\tbSigma({T-1})|^{-\frac{1}{2}}\nonumber\\
&\times \exp{}\left[-\frac{1}{2}(\tbm_{T-1}-r_{f,T}\bL^\prime\bi)^\prime\tbSigma(T-1)^{-1}(\tbm_{T-1}-r_{f,T}\bL^\prime\bi)-c\right.\nonumber\\
&+\frac{1}{2}\left(\tbm_{T-1}-r_{f,T}\bL^\prime\bi-\tbSigma({T-1})\mathbf{b}(\bw_{T-2})\right)^\prime \nonumber\\
&\times\left.(\bI+\bB(T)\tbSigma(T-1))^{-1}\tbSigma^{-1}(T-1)\left(\tbm_{T-1}-r_{f,T}\bL^\prime\bi-\tbSigma(T-1)\mathbf{b}(\bw_{T-2})\right)\right]\,,
\end{align}
}
where $E_{T-2}[\bry_{T-1}]=\tbm_{T-1}-r_{f,T}\bL^\prime\bi$.
Thus, the optimization problem $V(T-2,W_{T-2},\mathcal{F}_{T-2})\rightarrow$ max is equivalent to $V^{*}(T-2,W_{T-2},\mathcal{F}_{T-2})\rightarrow$ max, where
{
\begin{align}\label{max}
&V^{*}(T-2,W_{T-2},\mathcal{F}_{T-2})=-\frac{1}{2}(\tbm_{T-1}-r_{f,T}\bL^\prime\bi-\tbSigma({T-1})\mathbf{b}(\bw_{T-2}))^\prime\nonumber\\
&(\bI+\bB(T)\tbSigma({T-1}))^{-1}\tbSigma({T-1})^{-1}\left(\tbm_{T-1}-r_{f,T}\bL^\prime\bi-\tbSigma({T-1})\mathbf{b}(\bw_{T-2})\right)\nonumber\\
&=-\frac{1}{2}(\tbm_{T-1}-r_{f,T}\bL^\prime\bi-\tbSigma({T-1})\mathbf{b}(\bw_{T-2}))^\prime \nonumber\\
&\times\left(\tbSigma({T-1})+\tbSigma({T-1})\bB(T)\tbSigma({T-1})\right)^{-1}(\tbm_{T-1}-r_{f,T}\bL^\prime\bi-\tbSigma({T-1})\mathbf{b}(\bw_{T-2}))\,.
\end{align}
}
Because the matrix $\tbSigma({T-1})+\tbSigma({T-1})\bB(T)\tbSigma({T-1})$ is positive definite the maximum of $V^{*}(T-2,W_{T-2},\mathcal{F}_{T-2})$ is attained at $\bw^*_{T-2}$ for which
\begin{equation}\label{eq}
\tbm_{T-1}-r_{f,T}\bL^\prime\bi=\tbSigma_{T-1}\mathbf{b}(\bw^*_{T-2}).
\end{equation}
Using that $\bL\bL^\prime=\bI_k$ we obtain
\begin{equation}\label{weigths}
\bw^*_{T-2}=\dfrac{1}{\alpha W_{T-2}R_{f,T}}\left(\bL\tbSigma^{-1}({T-1})(\tbm_{T-1}-r_{f,T}\bL^\prime\bi)-\bL\bF^\prime\bSigma^{-1}(T)(\brn_T+r_{f,T}\bF\bL^\prime\bi)\right)\,.
\end{equation}
Furthermore, the equality (\ref{eq}) leads to
{
\begin{align}\label{Vt2}
&V(T-2,W_{T-2},\mathcal{F}_{T-2})=-|\bI+\bB(T)\tbSigma(T-1)|^{-\frac{1}{2}}\exp{(-c)}\nonumber\\
&\times\exp{\left(-\alpha W_{T-2}R_{f,T-1}R_{f,T}-\frac{1}{2}(\tbm_{T-1}-r_{f,T}\bL^\prime\bi)^{\prime}\tbSigma^{-1}(T-1)(\tbm_{T-1}-r_{f,T}\bL^\prime\bi)\right)}\\
&=-|\bI+\bB(T)\tbSigma(T-1)|^{-\frac{1}{2}}\exp{(-c)}\nonumber\\
&\times\exp{\left(-\frac{1}{2}\bry_{T-2}^\prime\tilde{\bB}(T-1)\bry_{T-2}-\tilde{\mathbf{b}}^\prime(\bw_{T-3})\bry_{T-2}-\tilde{c}\right)}\,,
\end{align}
}
where $\tilde{\bB}(T-1)=\tbF^\prime\tbSigma(T-1)^{-1}\tbF$, $\tilde{\mathbf{b}}(\bw_{T-3})=\tbF^\prime\tbSigma(T-1)^{-1}(\tbn-r_{f,T}\bL^\prime\bi)+\alpha W_{T-3}R_{f,T-1}\bL^\prime\bw_{T-3}+r_{f,T-1}\tilde{\bB}(T-1)\bL^\prime\bi$ and $\tilde{c}=\frac{1}{2}(\tbn-r_{f,T}\bL^\prime\bi+r_{f,T-1}\tbF\bL^\prime\bi)^\prime\tbSigma(T-1)^{-1}(\tbn-r_{f,T}\bL^\prime\bi+r_{f,T-1}\tbF\bL^\prime\bi)$.
Taking the conditional expectation from the value function (\ref{Vt2}) with respect to $\mathcal{F}_{T-3}$ we receive
{
\begin{align}\label{Vt3}
&V(T-3,W_{T-3},\mathcal{F}_{T-3})=-|\bI+\bB(T)\tbSigma(T-1)|^{-\frac{1}{2}}\exp{(-c)}\nonumber\\
&\times E_{T-3}\left(\exp{\left(-\frac{1}{2}\bry_{T-2}^\prime\tilde{\bB}(T-1)\bry_{T-2}-\tilde{\mathbf{b}}^\prime(\bw_{T-3})\bry_{T-1}-\tilde{c}\right)}\right)\,.
\end{align}
}
Consequently, the value function for the period $T-3$ has a similar structure than $V(T-2,W_{T-2},\mathcal{F}_{T-2})$ with the only difference that $\bB(T)$ is replaced by $\tbB(T-1)$, $\mathbf{b}(\bw_{T-2})$ by $\tbb(\bw_{T-3})$ and $c$ by $\tilde{c}$. It follows immediately that the optimal portfolio weights at period $T-3$ are
{\footnotesize
\begin{equation}\label{weightsT-3}
\bw^*_{T-3}=\dfrac{1}{\alpha W_{T-3}R_{f,T}R_{f,T-1}}\left(\bL\tbSigma^{-1}(T-2)(\tbm_{T-2}-r_{f,T-1}\bL^\prime\bi)-\bL\tbF^\prime\tbSigma^{-1}(T-1)(\brn_T^*+r_{f,T-1}\tbF\bL^\prime\bi)\right)\,
\end{equation}
}
with $\brn_T^*=\tbn-r_{f,T}\bL^\prime\bi$.

The last step is to use mathematical induction with basis $T-3$ in order to receive the statement of Theorem 1.
\end{proof}

The results of Theorem 1 show us that the optimal portfolio weights at every period of time except the last one depend on the covariance matrices of the next two periods and the conditional mean vector of the next period. This property turns out to be very useful if we want to calculate the optimal portfolio weights for a real data set.

Note that the case without predictable variables is a special case of Theorem 1. In this case the following expressions are obtained.
\begin{corollary}
Let $\bx_{\tau}=\left(X_{\tau,1},X_{\tau,2},\ldots,X_{\tau,k}\right)^{\prime}$ be a random return vector of $k$ risky assets which follows a VAR(1) process as defined in (\ref{VAR}) but without a vector of predictable variables $z_{\tau}$. Let $r_{f,\tau}$ be the return of the riskless asset. Then the optimal multi-period portfolio weights for period $T-1$ are given by
\begin{equation}\label{ohnez}
\bw^*_{T-1}=\dfrac{1}{\alpha W_{T-1}}\bSigma^{-1}(T)\brm_T=\dfrac{1}{\alpha W_{T-1}}\tbSigma^{-1}(T)(\brn_T+\bF\by_{T-1})~~\text{with}~~\brn_T=\bn-r_{f,T}\bi\,
\end{equation}
and for $t=2,\ldots,T$ by
\begin{equation}\label{weightsT-t}
\bw^*_{T-t}=\dfrac{1}{\alpha W_{T-t}\prod\limits_{i=T-t+2}^{T}R_{f,i}}\left(\bSigma^{-1}(T-t+1)\brm_{T-t+1}-\bF^\prime\bSigma^{-1}(T-t+2)(\brn_{T-t+2}+r_{f,T-t+2}\bF\bi)\right)\,,
\end{equation}
\end{corollary}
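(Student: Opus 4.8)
The plan is to derive Corollary~1 directly from Theorem~1 by specialising to the degenerate case in which the predictable vector $\bz_\tau$ is absent, i.e. $p=0$. First I would note that without predictable variables the joint process $\by_\tau=(\bx_\tau^\prime,\bz_\tau^\prime)^\prime$ collapses to $\by_\tau=\bx_\tau$, so that the selection matrix $\bL=[\bI_k\ \bO_{k,p}]$ of (\ref{VAR11}) becomes the identity $\bI_k$. Under this identification every ``full'' quantity appearing in Theorem~1 coincides with its return--only counterpart: from $\bSigma(t)=\bL\tbSigma(t)\bL^\prime$ we get $\tbSigma(t)=\bSigma(t)$, and likewise $\tbF=\bF$, $\tbn=\bn$, $\tbm_t=\bm_t$, and $\bL^\prime\bi=\bi$. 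These identifications are the only structural facts needed; no new optimisation has to be performed, because the Bellman recursion and the moment generating function argument of the Theorem carry over verbatim with $\bx$ in place of $\by$.

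The second step is to insert $\bL=\bI_k$ into the three weight formulas of Theorem~1 and simplify. For the terminal period $T-1$, formula (\ref{w_t-1}) gives $(\bL\tbSigma(T)\bL^\prime)^{-1}(\brn_T+\bF\by_{T-1})=\bSigma^{-1}(T)(\brn_T+\bF\bx_{T-1})=\bSigma^{-1}(T)\brm_T$, which is exactly (\ref{ohnez}). For a generic period $T-t$ the substitution removes the left factor $\bL$ and turns each occurrence of $\bL^\prime\bi$ into $\bi$, so that the general--period weight formula of Theorem~1 reduces to an expression in $\bSigma$, $\bF$, $\bm$ and $\bn$ alone; collecting the resulting terms yields the corollary's formula (\ref{weightsT-t}). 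Since the determinant prefactors $|\bI+\bB(t)\tbSigma(t-1)|^{-1/2}$ do not enter the weights, they play no role here.

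The main point requiring care -- and the only genuine obstacle -- is the index bookkeeping, because Theorem~1 treats the period $T-2$ through a separate formula (\ref{weigthsT-2}) and the remaining periods $t\ge 3$ through the recursive block, whereas Corollary~1 states a single expression valid for all $t=2,\ldots,T$. I would therefore verify that, after setting $\bL=\bI_k$, the specialised (\ref{weigthsT-2}) and the specialised recursive formula can be written uniformly, which amounts to reconciling the shifted constant and mean vectors $\brn^*$ and $\tbm^*$ of Theorem~1 with the unshifted $\brn$ and $\brm$ used in the corollary and to checking that the riskless--return subscripts in the shifts match across the two index ranges. As an independent confirmation I would re-run the backward induction of the Theorem in the reduced model $\by=\bx$ from scratch; because every step is identical but with $\bL$ suppressed, it must terminate in the same formulas, thereby validating the bookkeeping without any further computation.
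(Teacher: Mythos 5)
Your proposal is correct and follows essentially the same route as the paper, whose entire proof of Corollary~1 is the single observation that the results ``are obtained in the same way as the results of Theorem~1 by putting $\bL=\bI_k$ and $\by_t=\bx_t$ $(p=0)$.'' Your additional care with the index bookkeeping (reconciling $\tbm^*$, $\brn^*$ and the $r_f$ subscripts across the $t=2$ and $t\ge 3$ cases) goes beyond what the paper writes down, but it is exactly the verification the paper's one-line argument implicitly relies on.
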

\begin{proof}
The results of Corollary 1 are obtained in the same way as the results of Theorem 1 by putting $\bL=\bI_k$ and $\by_t=\bx_t$ $(p=0)$.
\end{proof}

In Corollary 2 the return vectors are assumed to be independent.
\begin{corollary}
Let $\bx_{\tau}=\left(X_{\tau,1},X_{\tau,2},\ldots,X_{\tau,k}\right)^{\prime}$ be a sequence of the independently and identically normally distributed vectors of $k$ risky assets, i.e., $\bx_{\tau}\sim \mathcal{N}(\bm, \bSigma)$. Let $r_{f,\tau}$ be the return of the riskless asset. We assume that $\bSigma$ is positive definite. Then for all $t=1,\ldots,T$ the optimal multi-period portfolio weights for period $T-t$ are given by
\begin{equation}\label{iweightsT-t}
\bw^*_{T-t}=\dfrac{1}{\alpha W_{T-t}\prod\limits_{i=T-t+2}^{T}R_{f,i}}\bSigma^{-1}\brm~~~\text{with}~~\brm=\bm-r_{f,T-t+2}\bi\,.
\end{equation}
\end{corollary}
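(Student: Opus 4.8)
The plan is to obtain Corollary~2 as the degenerate, memoryless instance of Corollary~1. Independence together with identical distribution is, in the VAR(1) language of (\ref{VAR11}), precisely the statement that the autoregressive matrix vanishes, $\bF=\bzero$, while the intercept and the conditional covariance are constant in time, $\bn=\bm$ and $\bSigma(\tau)=\bSigma$; these choices turn (\ref{VAR11}) into $\bx_\tau=\bm+\eps_\tau$ with $\eps_\tau\sim\mathcal{N}(\bzero,\bSigma)$ independent across $\tau$, which is exactly the hypothesis of Corollary~2. Since no predictable variables are present I also put $\bL=\bI_k$, $\by_\tau=\bx_\tau$ and $p=0$, exactly as in the proof of Corollary~1.

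First I would substitute $\bF=\bzero$ into the weight formula of Corollary~1. Its entire second summand is left-multiplied by $\bF^\prime$ and therefore vanishes identically; this is the algebraic signature of the myopia phenomenon, namely that for independent returns the intertemporal hedging demand against shifts in the next-period conditional mean disappears and only the single-period mean--variance term survives. What remains is $\bSigma^{-1}\brm$ multiplied by the scalar $1/(\alpha W_{T-t}\prod_{i=T-t+2}^{T}R_{f,i})$, where $\brm$ is the now deterministic and time-constant one-period excess-return mean attached to the period immediately following the decision date, which is the asserted expression. The base case $t=1$ is read directly off (\ref{w_t-1}) after the same substitutions, the empty product being $1$.

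As a self-contained check I would rerun the backward recursion of Theorem~1 directly under independence, which is far shorter because $\bF=\bzero$ collapses the quadratic-form matrix $\bB$ to $\bzero$, the inverses $(\bI+\bB\tbSigma)^{-1}$ to $\bI$, and the determinant factors $|\bI+\bB\tbSigma|^{-1/2}$ to $1$. The assertion to propagate by induction is that, since future returns are independent of $\mathcal{F}_{T-t}$, the value function depends on the filtration only through current wealth and stays log-affine in it, $V(T-t,W_{T-t},\mathcal{F}_{T-t})=-K_{T-t}\exp(-\alpha(\prod_{i=T-t+1}^{T}R_{f,i})W_{T-t})$ for a deterministic $K_{T-t}>0$. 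Granting this at step $T-t+1$, the Bellman equation (\ref{BE_Sec3}) together with the wealth law (\ref{wealth_Sec3}) reduces, through the Gaussian moment generating function of $\brx_{T-t+1}$, to a one-period mean--variance problem in $\bw_{T-t}$ with effective absolute risk aversion $\alpha W_{T-t}\prod_{i=T-t+2}^{T}R_{f,i}$; its first-order condition is linear and is solved by $\bSigma^{-1}\brm$ up to the reciprocal of that quantity, and reinserting the optimum reproduces the log-affine form at step $T-t$, closing the induction with basis (\ref{w_t-1}).

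The computation carries no genuine analytic difficulty once $\bF=\bzero$ is imposed, so the main obstacle is purely the bookkeeping of the riskless factors: one must check that each backward step multiplies the effective risk aversion by exactly one further $R_{f,i}$, so that after $t-1$ steps the accumulated factor is $\prod_{i=T-t+2}^{T}R_{f,i}$, and that the excess-return mean entering $\bSigma^{-1}\brm$ is consistently the one of the period immediately following the decision date. This index tracking is the step I would verify most carefully, since it is the only place where an off-by-one slip could enter.
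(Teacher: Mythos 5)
Your proposal is correct and follows the paper's own route: the paper likewise obtains Corollary~2 directly from Corollary~1 by setting $\bF=\bzero$, $\bSigma(t)=\bSigma$ and $\bn=\bm$, whereupon the $\bF^\prime$-premultiplied term vanishes. Your additional independent rerun of the backward recursion is a sound consistency check but goes beyond what the paper does.
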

\begin{proof}
Corollary 2 immediately follows from Corollary 1 putting $\bF=\bzero$, $\bSigma(t)=\bSigma$ and $\bn=\bm$.
\end{proof}
The results of Corollary 2 can be obtained as a partial case of \c{C}anako$\breve{\text{g}}$lu and \"{O}zekici (2009), where the stochastic market was presented by a discrete time Markov chain. In that case the asset returns depend on the present state of the market and not on the previous ones which implies the independence of the asset return over time.

It is noted that the dynamics of the optimal portfolio weights in Corollary 2 is hidden in the coefficient of the absolute risk aversion $\alpha$  which is given by $\alpha_{\tau}=\left(\alpha W_{T-\tau}\prod\limits_{i=T-\tau+2}^{T}R_{f,i}\right)^{-1}$. Moreover, the expressions of the weights themselves are proportional to the weights of the so-called tangency portfolio (cf. Ingersoll (1987, p. 89), Britten-Jones (1999)). Because the
tangency portfolio is usually considered as a market portfolio in the single-period allocation
problem (see, e.g., Britten-Jones (1999)) we treat the weights given in (\ref{iweightsT-t}) as the weights of a benchmark portfolio in our empirical study presented in the next section.

\section{Empirical Study}
In this section we apply the results of Section 2 to real data. In following we consider an investor who invests into an international portfolio. The portfolio consists of the capital market indices of five developed stock markets, namely Belgium, Germany, Japan, the UK, and the USA. We deal with weekly data of the MSCI (Morgan Stanley Capital International) indices for the equity market returns from the January 4,2002 to December 4,2009. A process is fitted to the return series
\begin{equation}\label{VARes}
\bx_t=\boldsymbol{\nu}+\mathbf{\Phi}\by_{t-1}+\eps_t \quad \text{with} \quad \eps_t \sim ii\mathcal{N}(\mathbf{0},\mathbf{\Sigma}_{\varepsilon})\,.
\end{equation}
We get
\begin{equation}\label{VARes_int}
\bol{\nu}=
\left[
\begin{array}{r}
4.83e-04\\
1.20e-03\\
6.74e-04\\
5.54e-04\\
2.79e-05\\
\end{array}\right], \mathbf{\Phi}=\left[
\begin{array}{rrrrr}
0.2011& -0.1592&  0.01892& -0.196& 0.455\\
0.3139& -0.1231& -0.00191& -0.511& 0.434\\
0.0487&  0.0888& -0.12131& -0.224& 0.343\\
0.1829 & -0.0889&  0.00988& -0.441& 0.382\\
 0.0766 & -0.0643& -0.03049& -0.114& 0.133\\
\end{array}\right],\; \text{and} \;
\end{equation}
\begin{equation*}
\mathbf{\Sigma}_{\varepsilon}=
\left[
\begin{array}{rrrrr}
0.0013085186 & 0.0010544496 & 0.0004365753 & 0.0009120373 & 0.0006781289\\
0.0010544496 & 0.0013833540 & 0.0005648237 & 0.0010218539 & 0.0008332314\\
0.0004365753 & 0.0005648237 & 0.0007994341 & 0.0004733366 & 0.0003667012\\
0.0009120373 & 0.0010218539 & 0.0004733366 & 0.0010176793 & 0.0006927251\\
0.0006781289 & 0.0008332314 & 0.0003667012 & 0.0006927251 & 0.0007242233\\
\end{array}\right].
\end{equation*}

It is remarkable that the last column of the matrix $\bol{\Phi}$ has the largest values which indicate on a strong positive correlation between the US market and the other markets. Moreover, it shows that the influence of the US market on the return indices is larger than those of the domestic ones. Following Campbell et al. (2003) we choose the stock index of the US market as a predictable variable $z_t$ in our empirical study.

Next, we calculate the weights of the two multi-period portfolio strategies for an exponential utility function. We want to compare the case of correlated return vectors given in Theorem 1 with the case of independent variables given in Corollary 2 which completely ignores the time dependence structure well documented for real data.

The performance of both strategies is compared with each other via an extensive simulation study based on $10^5$ independent repetitions. The multi-period portfolio strategies are constructed for $T \in \{13,26,52,104\}$ and for the coefficient of relative risk aversion (RRA) $\alpha_r \in \{0.8, 2\}$. The RRA $\alpha_r=\alpha W_0$ is chosen as a constant absolute risk aversion (ARA) $\alpha$ in this study (without loss of generality we put $W_0=1$). In order to compare the performance of these two strategies we determine the empirical cumulative distribution function (ECDF) of the investor's terminal wealth for each strategy.

The obtained results are presented in Figure 1 and 2. If we compare the performance of two portfolio strategies by their ECDFs, we should choose the strategy whose distribution function lies below the other because the probability of getting a larger wealth is larger for the strategy with a stochastically smaller distribution function. The strategy based on the weights given in Theorem 1 is denoted by EXP, while the notation EXP-iid is used for the method with the weights of Corollary 2.

Figure 1 presents the results for a smaller value of the coefficient of the relative risk aversion. We observe that EXP overperforms EXP-iid for all considered investment periods $T$. For instance, for $T=104$ the probability of getting a wealth between $60$ and $80$ is equal to roughly $25\%$ for EXP while it is almost zero for EXP-iid. For a small horizon $T$ there exists a small probability of bankruptcy for both strategies but it differs not significantly. For $T\geq52$ the probability of a loss tends to zero. The EXP and the EXP-iid strategies both improve as $T$ becomes larger what indicates their good performance in the long-run setting.

Similar results are obtained for larger values of the coefficient of the relative risk aversion (see Figure 2). The performance of the EXP strategy is better for all $T$. From the other side, the probability of obtaining a larger value of the wealth is for both strategies smaller in comparison to the results presented in Figure 1.

Using the results of both figures we can conclude that the EXP strategy has a higher performance for all $T$ and risk levels. Of course this is not surprising since more information about the distribution of the asset returns is taken into account. On the other hand, ignoring the time dependence of the asset returns weakens the results with respect to the final wealth but it does not influence the probability of being bankrupt at the end of the investment period. Moreover, it has to be noted that the comparison of the ECDFs of the expected utilities is not relevant in our study because both strategies give the maximum expected utility in most of the cases and do not differ significantly.

\section{Summary}
Although the first formulation of the multi-period portfolio choice problem was already provided by Markowitz (1952), there are only a few results on closed-form solutions available in literature. They are mostly derived under the assumption that the asset returns are independently distributed. Merton (1969) discovered that the maximization of the exponential utility function for normally distributed returns is equivalent to the maximization of the mean-variance utility function. \c{C}anako$\breve{\text{g}}$lu and \"{O}zekici (2009) obtained a closed-form solution for the exponential utility function under the assumption that the asset returns are independent.
In general, the derivation of an analytical solution of the multi-period portfolio choice problem with the exponential utility for discrete time was considered to be very difficult (see, e.g., Soyer and Tanyeri (2006)).

In the present paper we derive an exact solution of the multi-period portfolio selection problem for an exponential utility function which is obtained under the assumption that the asset returns and the vector of predictable variables follow a vector autoregressive process of order 1. Under the assumption of independence the obtained expressions of the weights are proportional to the weights of the tangency portfolio obtained as a solution in the case of a single-period optimization problem. We show that only the coefficient of absolute risk aversion depends on the dynamics of the asset returns in this case. The weights of the optimal portfolio derived without a vector of predictable variables are obtained as a partial case of the suggested general solution. In an empirical study we compare the derived multi-period portfolio strategies for real data taking five developed stock market indices. A very good performance of the general solution is observed which always overperforms the weights derived under the additional assumption that the asset returns are independent.

The obtained results can be further extended by taking into account the uncertainties about the parameters of the data generating process. The analytical expressions of the weights can be used to derive the expected mean vector and the covariance matrix of the estimated weights which provide us the starting point for the detailed analysis of their distributional properties. This problem is not treated in the present paper and it is left for future research.

\begin{landscape}
\renewcommand{\baselinestretch}{1.0}
\begin{figure}
\caption{\footnotesize Empirical distribution function of the final wealth after $T$ periods using the portfolio weights of Theorem 1 (EXP) and the weights of Corollary 2 (EXP-iid) for the process considered in Section 3 ($\alpha=0.8$, $10^5$ repetitions).}
\vspace{-0.9cm}
\begin{center}
\includegraphics[angle=270, width=25cm]{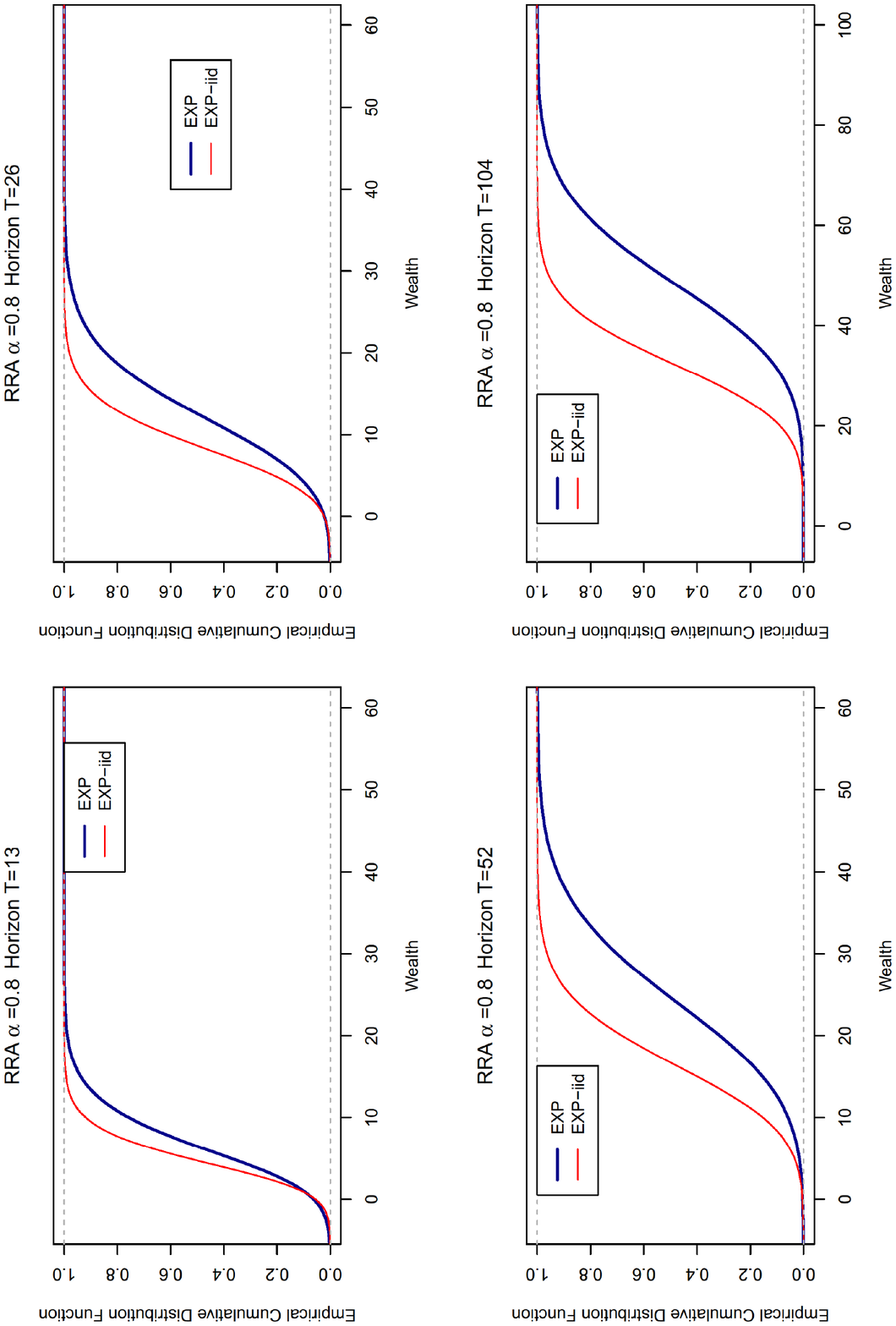}
\end{center}
\label{fig:ECDF_MSCI}
\end{figure}
\end{landscape}

\begin{landscape}
\renewcommand{\baselinestretch}{1.0}
\begin{figure}
\caption{\footnotesize Empirical distribution function of the final wealth after $T$ periods using the portfolio weights of Theorem 1 (EXP) and the weights of Corollary 2 (EXP-iid) for the process considered in Section 3 ($\alpha=2$, $10^5$ repetitions).}
\vspace{-0.9cm}
\begin{center}
\includegraphics[angle=270, width=25cm]{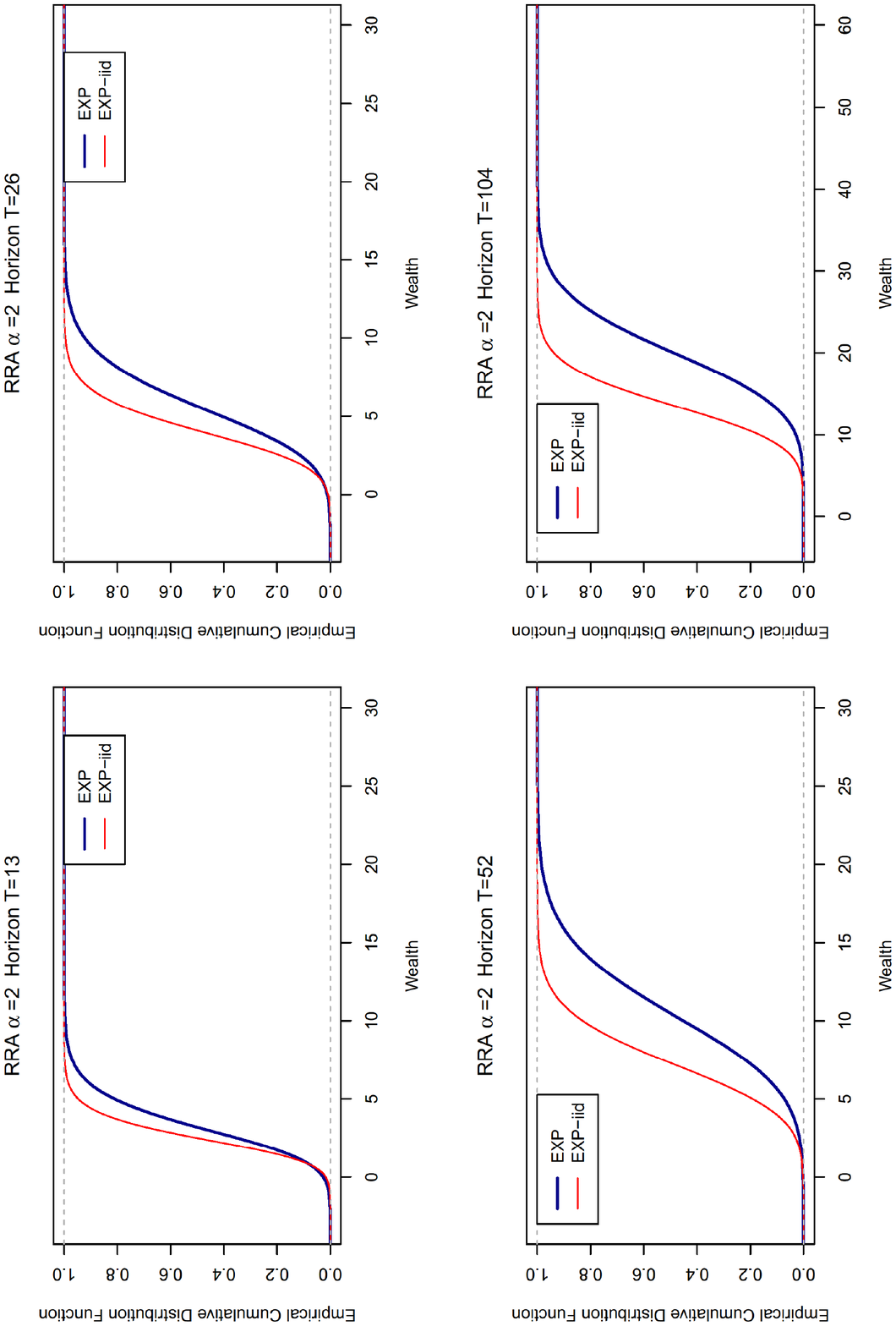}
\end{center}
\label{fig:ECDF_MSCI}
\end{figure}
\end{landscape}

\end{document}